\newtheorem{theorem}{Theorem}
\begin{document}

\title{Amplification of quadratic Hamiltonians}

\author{Christian Arenz} 
\affiliation{Frick Laboratory, Princeton University, Princeton NJ 08544, United States}

\author{Denys I. Bondar} 
\affiliation{Tulane University,  New Orleans, LA 70118, United States}

\author{Daniel Burgarth} 
\affiliation{Department of Physics and Astronomy, Macquarie University, Sydney, NSW 2109, Australia}

\author{Cecilia Cormick} 
\affiliation{Instituto de F\'{i}sica Enrique Gaviola, CONICET and Universidad Nacional de C\'{o}rdoba,
Ciudad Universitaria, X5016LAE, C\'{o}rdoba, Argentina}

\author{Herschel Rabitz} 
\affiliation{Frick Laboratory, Princeton University, Princeton NJ 08544, United States}

\maketitle

\begin{abstract}
Speeding up the dynamics of a quantum system is of paramount importance for 
quantum technologies. 
However, in finite dimensions and without full knowledge of the details of the system, it is 
easily shown to be \emph{impossible}. In contrast we show that 
continuous variable systems described by a certain class of quadratic 
Hamiltonians can be sped up without such detailed knowledge. We call the 
resultant procedure \emph{Hamiltonian amplification} (HA). The HA method relies on the 
application of local squeezing operations allowing for amplifying even unknown or noisy couplings 
and frequencies by acting on individual modes. Furthermore, we show how to combine HA  
with dynamical decoupling to achieve amplified Hamiltonians that are free from 
environmental noise. Finally, we illustrate a significant reduction in gate times of cavity 
resonator qubits as one potential use of HA.  
\end{abstract}

\section{Introduction}

Strong interactions between the components of a quantum device are crucial for 
maintaining the quantum effects relevant for quantum information processing. For 
instance, the generation of entanglement between system components 
\cite{Entanglement1, Entanglement2}, the implementation of multi-qubit gates 
\cite{MultiQGates1, MultiQGates2, MultiQGates3}, and, in general,  coherent 
interactions used in quantum metrology and quantum sensors \cite{Sensing1, 
Sensing2}, all rely on sufficiently strong couplings. Furthermore, the 
speed at which a quantum system can evolve towards a desired target is inherently 
limited by the strength of the underlying Hamiltonian which governs the 
evolution \cite{SpeedLimitsRev1, SpeedLimitsRev2}. A generic framework 
describing how such couplings/processes can be amplified is therefore highly 
desirable. As schematically summarized in Fig. 1, while \emph{Dynamical Decoupling} (DD)  achieves the opposite by suppressing unknown couplings/processes through rapidly applied controls \cite{LVioal1,Dec1,BookLidar}, here we propose an amplification procedure referred to as \emph{Hamiltonian 
amplification} (HA). In particular, we show that through rapidly applying parametric controls, even unknown parameters of an important class of quadratic Hamiltonians (see.
Eq. \eqref{eq:hamquad}) can generically be enhanced. Consequently, by combining HA and (partial) DD, system parameters can selectively be amplified and unwanted interactions, for instance with the environment, are simultaneously suppressed, thereby opening up a path for the manipulation of quantum systems in ways that were commonly thought to be 
impossible.

\begin{figure}[t]
\centering
\includegraphics[width=0.75\columnwidth]{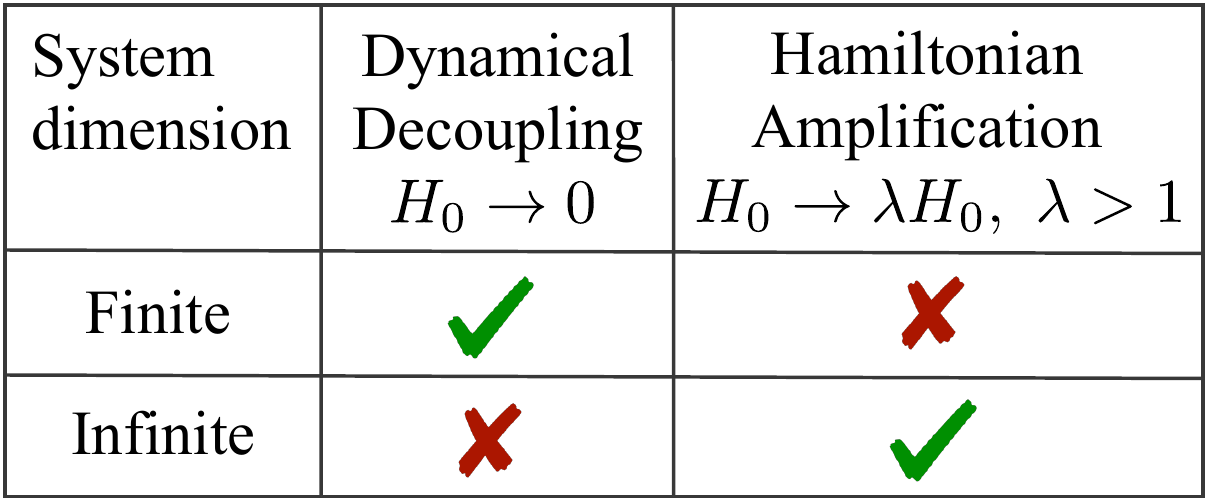}\caption{
\label{fig:intro}  The paper introduces Hamiltonian Amplification (HA) that allows for amplifying an unknown Hamiltonian $H_{0}$ by an amount $\lambda>1$.  While HA is impossible in finite dimensional systems, we show that some infinite dimensional system (Eq. \eqref{eq:hamquad}) can be amplified through rapidly applied local parametric controls.  In contrast, for some infinite dimensional systems the opposite, i.e., Dynamical Decoupling (DD) that achieves averaging out Hamiltonians, is impossible \cite{Dec1}, while for finite dimensional systems DD can always succeed \cite{BookLidar}. However, for infinite dimensional systems DD can be used to suppress certain interactions. Thus, we show that HA can be combined with DD to enhance desired evolutions and simultaneously fight decoherence.}
\end{figure}

Given a time independent Hamiltonian $H_0$, our goal is to turn the 
natural evolution $\exp (-iH_0t)$ into an accelerated evolution 
$\exp (-i\lambda H_0 t)$,  $\lambda > 1$, by adding a suitable control Hamiltonian 
$H_{c}(t)$ that induces a speed-up of the  system's evolution. If we 
know $H_0$ and have full control over the system, this can be trivially 
achieved by choosing $H_c(t)=H_c=(\lambda-1)H_0$. But the realistic case of 
interest arises if some parameters of the Hamiltonian are 
unknown or if we have only partial control of the system. If the speed-up can be 
achieved under these conditions, a physical effect observed at a time $t$ can 
then be observed at a shorter time $t/\lambda$ without knowing the details of 
the system. For example, a particularly important goal would be amplification of 
the (possibly uncertain) interaction strength between 
qubits by local controls only. 

 While in finite dimensions such
amplification is generally \emph{not possible} with either unknown parameters or 
partial control (as a consequence of norm preservation \cite{Lidar}) we show, surprisingly, 
that \emph{both} limitations can 
be overcome in some infinite-dimensional systems by parametrically driving the 
system components. 

Unlike other protocols recently developed for superconducting circuits and opto-mechanics \cite{Sim1, Sim2, Sim3} that suffer from  also increasing the interactions with the environment \cite{Home}, HA provides here a general framework that goes beyond a particular implementation, and combined with DD can avoid this drawback.

\section{Amplifying quadratic Hamiltonians}
Consider a quantum system described by a Hamiltonian $H_{0}$ that 
is driven by means of an external, possibly time dependent, control Hamiltonian 
$H_{c}(t)$ so that the total Hamiltonian governing the system's evolution reads $H(t)=H_{0}+H_{c}(t)$. Analogous to DD \cite{BookLidar}, in order to introduce HA we separate the total evolution into that of the 
controller alone $U_{c}(t)$ and its action on $H_{0}$. The total evolution is 
then given by $U(t)=U_{c}(t)\tilde{U}(t)$ where the evolution in the frame 
rotating with $U_{c}(t)$ reads $\tilde{U}(t)= \exp(-it 
\bar{H}(t))$ where 
$\bar{H}(t)$ is given by the Magnus expansion \cite{BookLidar,Magnus}. The goal of HA is to find controls 
so that $\bar{H}(t)=\lambda H_{0}$, $\lambda>1$, for all $t$. As proven in Appendix \ref{app1}, this is impossible for finite 
dimensional systems. Note, however, that DD, i.e., $\lambda=0$ for traceless $H_{0}$, can always be achieved when the system is finite dimensional \cite{LVioal1}. The situation changes when continuous variable systems, such as quantum harmonic oscillators, living in an infinite dimensional space, are considered. Here DD \emph{does not} always work \cite{Dec1}, while, as we show below, HA is possible.   

 In order to introduce the concept, we start by 
considering the evolution of a quantum harmonic oscillator with frequency 
$\omega$ described by the Hamiltonian $H_{0} 
=\frac{\omega}{2}(x^{2}+p^{2})$ where $x$ and $p$ are the canonical position and 
momentum operators, respectively. We begin by considering bang-bang type controls that correspond to delta-function like pulses implementing, at time intervals $\Delta t=\frac{t}{2n}$,   
the squeezing operations $S^{(\pm)}=\exp[\pm i 
\frac{r}{2}(xp+px)]$ where $r$ is the squeezing parameter. The canonical 
operators transform 
under squeezing according to $S^{(\pm)\dagger}xS^{(\pm)}=\exp(\mp 
r)x$ and $S^{(\pm)\dagger}pS^{(\pm)}=\exp(\pm r)p$. 
Thus, if we alternate between $S^{(+)}$ and $S^{(-)}$ the modified dynamics is 
approximately given by
 $ \left(S^{(-)\dagger}e^{-i\frac{H_{0}t}{2n}}S^{(-)}S^{(+)\dagger}e^{-i\frac{H_{0}t}{2n}}S^{(+)}\right)^{n}\approx e^{-i\cosh(2r)H_{0} t}$, which becomes exact in the limit of rapid squeezing, i.e, $\Delta t\to 0$, 
$n\to\infty$ with $t$ fixed.
Consequently the operations $V=\{S^{(\pm)}\}$ amplify the harmonic 
oscillator by a factor $\lambda=\cosh(2r)$. The physics behind the amplification may be readily understood: squeezing in 
the $x$-direction maps the operator $x$ onto $e^{-r}x$, while squeezing in 
$p$-direction maps $x$ onto $e^r x$. In the case of very fast alternation, the 
operator is effectively averaged into $\cosh(r)x$ and thus amplified by a 
factor $\cosh(r)$, and the same applies to any other phase-space 
quadrature. For the operators $x^2$ and $p^2$, the same reasoning leads to an 
amplification by $\cosh(2r)$. We now
generalize this idea to more complex continuous-variable systems. 
 
Consider again bang-bang controls that instantaneously implement 
a set of (Gaussian) unitary operations $V$ in a Suzuki-Trotter 
type 
sequence $\Lambda_{t/n}=\prod_{v\in V}v^{\dagger}\exp(-iH_{0}\frac{t}{|V|n})v$, 
where $H_0$ is a quadratic Hamiltonian \cite{QuadraticH, Serafini} of a system of continuous variables (see 
below for details) and we denote by $|V|$ the number of elements in $V$. 
Since quadratic 
Hamiltonians can be represented by matrices, the map $\Lambda_{t/n}$ converges 
and in the limit we have $\lim_{n\to\infty}\Lambda_{t/n}^{n}=\exp(-it M(H_{0}))$ where the 
dynamics  is governed by the average \cite{BookLidar},
\begin{align}
\label{eq:ampmap}
M(H_{0})=\frac{1}{|V|}\sum_{v\in V}v^{\dagger}H_{0}v. 	
\end{align}

A most natural class of continuous-variable systems to consider is described 
by quadratic Hamiltonians of the form 
 \begin{align}
 \label{eq:hamquad}
H_{0}=\sum_{i,j}(\omega_{i,j}^{(x)}x_{i}x_{j}+\omega_{i,j}^{(p)}p_{i}p_{j}),	
\end{align}
which typically serve as a model for a wide range of quantum-optical and 
opto-mechanical systems \cite{QuadraticH1} consisting of $N$ linearly 
interacting quantum harmonic oscillators, where $\omega_{i,j}^{(x)}$ and 
$\omega_{i,j}^{(p)}$ determine the frequencies 
$(i=j)$ and coupling constants $(i\neq 
j)$. Now, assume that the dynamics can be modified through 
$V=\{\mathcal{S}^{(\pm)}\}$ where 	$\mathcal{S}^{(\pm)}=\prod_{i=1}^{N}S_{i}^{(\pm)}$,
is a product of local squeezing operations $S_{i}^{(\pm)}$ in $x_{i}$ and 
$p_{i}$ direction of the $i$th oscillator.  We find $M(H_{0})=\cosh(2r)H_{0}$ for all Hamiltonians of the form \eqref{eq:hamquad}, concluding that through local squeezing operations any quadratic 
Hamiltonian \eqref{eq:hamquad} can be amplified by a factor 
$\cosh(2r)$ without knowledge of the frequencies and coupling constants present 
in $H_{0}$. Thus, contrary to the case of
finite-dimensional systems, coupling strengths and frequencies
can be amplified through local unitary operations. We remark here that for two linearly 
interacting harmonic oscillators the coupling can also be enhanced by acting on a single 
oscillator alone. This could be of 
particular importance for amplifying the entanglement creation between a light 
mode and a mechanical oscillator in an optomechanical system 
\cite{Entanglement2, OptoMRev}.

The quadratic Hamiltonian \eqref{eq:hamquad} differs 
from a generic quadratic Hamiltonian $H=\frac{1}{2}\sum_{i,j}A_{i,j}R_{i}R_{j}$ where $A$ is a real and symmetric 
$2N\times 2N$ matrix and the canonical operators constitute the vector 
$R=(x_{1},p_{1},\cdots, x_{N},p_{N})$ \cite{Serafini}, only by single-mode and two-mode squeezing 
terms $\propto x_{i}p_{j}$, which we collect to form the Hamiltonian $H_{1}$. Such terms are left invariant by squeezing the oscillator in $x$ and $p$ directions
and thus, a generic quadratic Hamiltonian $H=H_{0}+H_{1}$ is transformed under amplification through $\{\mathcal S^{(\pm)}\}$ according to $M(H)=\cosh(2r)H_{0}+H_{1}$. We see that if the actual system under consideration deviates from the quadratic Hamiltonian \eqref{eq:hamquad} by the presence 
of two-mode squeezing terms, such terms become 
negligible when the system is sufficiently amplified. We further remark that two-mode squeezing terms can be amplified by locally squeezing around different angles. However, we did not find a procedure yet that amplifies generic quadratic Hamiltonians.

The HA procedure above assumed that the squeezing operations were  
implemented instantaneously and alternating infinitely fast. The performance of 
the protocol is now considered under more 
realistic conditions. The error $\epsilon$ that is induced for finite $n$, i.e. for a finite 
waiting time $\Delta t$, can be upper-bounded by using well-known bounds for 
the Trotter sequence \cite{Trotter} and exploiting the symplectic representation 
of 
quadratic Hamiltonians \cite{QuadraticH, Serafini}. That is, the time evolution 
$U=\exp(-iH_{0}t)$ of a quadratic Hamiltonian $H_{0}$ is represented by a 
symplectic transformation $\exp(-tA\Omega)\in \text{Sp}(2N,\mathbb{R})$ where 
$\Omega$ is the symplectic form. Using the symplectic representation of 
$\Lambda_{t/n}$ and $M(H_{0})$, with further details found in Appendix \ref{app2}, we can upper-bound the error by  
\begin{align}
\label{eq:error}
\epsilon\leq \frac{t\Delta t \omega_{\text{max}}^{2} 
N^{2}}{4}|\sinh(4r)|e^{t\omega_{\text{max}}N\sqrt{\frac{\cosh(4r)}{2}}},
\end{align}
 where the error was evaluated using the Hilbert-Schmidt norm difference of the 
symplectic matrices and 
$\omega_{\text{max}}=\max_{i,j}\{|\omega_{i,j}^{(x)}|,|\omega_{i,j}^{(p)}|\}$ 
is 
the highest frequency/coupling constant present in the amplified Hamiltonian\footnote{While the Hilbert-Schmidt norm difference $\epsilon$ of symplectic matrices is typically used as a distance measure for (quadratic) continuous variable control systems (see e.g.,  \cite{HerschContinuous, DanielContinuous}), we remark here that it is not an operational fidelity measure. However, in the spirit of \cite{Silvan}, $\epsilon$ can be made operational if one imposes energy constraints on the system Hamiltonian $H$ by, for instance, considering only Gaussian states for which the expectation of $H$ is upper bounded by a predefined value.}. While HA is about transforming Hamiltonians rather than preparing states, in Appendix \ref{app4} we also investigate an example of the ``Gaussian'' fidelity error \cite{Banchi} to speed up the preparation of Gaussian states using HA. 

Instead of using 
instantaneous bang-bang operations, we now turn to  
amplification using smooth pulses with a finite amplitude. A time-dependent squeezing parameter $r(t)$ is considered such that the 
Hamiltonian of the controller reads 
$H_{c}(t)=r(t)\sum_{j=1}^{N}(x_{j}p_{j}+p_{j}x_{j})$. We take a 
periodic controller $U_{c}(t+2n\Delta t)=U_{c}(t)$ for $n\in\mathbb N$ so that 
the time evolution at times $t=n2\Delta t$ in the presence of $H_{0}$ is given by the 
unitary transformation $U(t)=\exp(-2in\Delta t\bar{H})$. Instead of modifying the dynamics with instantaneous unitary operations 
$\mathcal S^{(\pm)}$ applied in cycles of duration $2\Delta t$, we now change the 
dynamics smoothly, such that for small $\Delta t$ the dynamics is effectively 
governed by the 1st order term of the Magnus expansion $\bar{H}^{(0)}=\frac{1}{2\Delta t}\int_{0}^{2\Delta t}U_{c}^{\dagger}(t)H_{0}U_{c}(t)dt$, which is the 
continuous equivalent to the map \eqref{eq:ampmap}. The higher-order terms of 
the Magnus expansion introduce errors which can be neglected if $\Delta t$ is 
sufficiently small. As an example, with details found in  Appendix \ref{app3}, 
the pulse
$r(t)=\frac{\pi K}{2\Delta t}\cos\left(\frac{\pi t}{\Delta t}\right)$ yields $\bar{H}^{(0)}=\lambda H_{0}$ with $\lambda=I_{0}(K)$ being the modified Bessel function of the first kind. Here, the amount of amplification is 
chosen by the constant $K$, which, together with the frequency of the pulse 
$\propto \Delta t^{-1}$, determines the pulse amplitude. We remark that for 
$\Delta t\to 0$, which yields perfect amplification, the pulse 
amplitude as well as its frequency becomes infinitely large, thus 
approaching an infinitely fast bang-bang type control. However, we further show in Appendix \ref{app3} that a large class of pulses exist that amplify the 1st order and simultaneously suppress the 2nd order of the Magnus expansion. Such pulses can be more effective in achieving HA, which was confirmed by numerical simulations. Furthermore, a detailed analysis in Appendix \ref{app4} of the efficiency of imperfect controls upon amplification indicates that errors in the amplification pulses enter for sufficiently small $\Delta t$ linearly in the amplification process.

\section{Amplification and Decoupling}
While the system 
parameters are amplified, possibly unwanted couplings with  
the environment are amplified too. However, we now show how HA can be combined with some form of DD so that interactions with the environment are suppressed 
while system parameters are amplified, thereby achieving amplification that is free from environmental noise.

As shown in \cite{Dec1}, it is not possible to average generic quadratic Hamiltonians to zero (or to multiples of the identity) through rapidly applied unitary operations (i.e., DD in the sense introduced before). However, parts of the Hamiltonian, such as interactions quadratic in the quadrature operators, can be suppressed \cite{Dec1, Vitali}. In general, in the limit of infinitely fast operations the 
dynamics is governed by a map $M^{\text{(\text{DD})}}$ of the form 
\eqref{eq:ampmap}, where the decoupling set $V^{(\text{DD})}$ consists of  
unitary transformations that allow for suppressing the unwanted interactions. Consider a continuous-variable system $S$ that interacts with 
another continuous-variable system $E$ in a linear way described by the total 
Hamiltonian $H=H_{S}+H_{E}+H_{SE}$ where $H_{S},~H_{E}$ are the Hamiltonians of 
$S$ and $E$, respectively, and $H_{SE}=\sum_{i\in S,j\in E}(\omega_{i,j}^{(x)}x_{i}x_{j}+\omega_{i,j}^{(p)}p_{i}p_{j})$ describes the interaction, assuming that 
all Hamiltonians are of the form \eqref{eq:hamquad}. The dynamics of system $S$ 
can always be decoupled from $E$ using the operations 
$V_{S}^{(\text{DD})}=\{\mathcal R_{S}(0),\mathcal R_{S}(\pi)\}$ where $\mathcal 
R_{S}(\phi)=\prod_{i\in S}\exp(-i\frac{\phi}{2}(x_{i}^{2}+p_{i}^{2}))$ is a 
product of local rotations of the system oscillators \cite{Dec1, Vitali}. The 
operation $\mathcal R_{S}(\pi)$ inverts the sign in front of $H_{SE}$ so that 
the interaction between $S$ and $E$ is suppressed by acting on $S$ alone, i.e., 
$M_{S}^{(\text{DD})}(H)=H_{S}+H_{E}$. In contrast to  finite-dimensional 
systems, the dynamics of $S$ remains invariant under the decoupling map 
$M_{S}^{(\text{DD})}$, which is the attractive feature of why DD and 
HA can be combined. 

Let $M_{S}^{(\text{HA})}$ be the map that amplifies system $S$ by a factor $\cosh(2r)$ through the set of operation $V_{S}^{(\text{HA})}=\{\mathcal S_{S}^{(\pm)}\}$ where $\mathcal S_{S}^{(\pm)}=\prod_{i\in S}S_{i}^{(\pm)}$ so that $M_{S}^{({\text{HA}})}(H_{S})=\cosh(2r)H_{S}$, noting that any linear interaction with $E$ is amplified too, i.e, $M_{S}^{{(\text{HA}})}(H_{SE})=\cosh(r)H_{SE}$. However, through concatenating both maps $\mathcal G_{S}\equiv M_{S}^{(\text{HA})}\circ M_{S}^{(\text{DD})}$ we see that $\mathcal G_{S}(H)=\cosh(2r)H_{S}+H_{E}$. Thus, system $S$ becomes simultaneously amplified and decoupled from the environment $E$. The map $\mathcal G_{S}(\cdot)=\frac{1}{|G_{S}|}\sum_{g\in G_{S}}g^{\dagger}(\cdot)g$ that achieves amplification and decoupling is determined by the set of operations $G_{S}=\{\mathcal R_{S}(\pi)\mathcal S_{S}^{(+)},\mathcal S_{S}^{(+)},\mathcal R_{S}(\pi)\mathcal S_{S}^{(-)},\mathcal S_{S}^{(-)}\}$
applied only on $S$. Thus, through alternating between squeezing and rotating the system oscillators, the interaction with another system is suppressed and the system of interest is amplified, noting that DD can also be achieved using smooth pulses \cite{Dec1}.

\section{Amplification of qubit interactions} 
The proposed HA only works in an infinite-dimensional setting. However, 
finite-dimensional systems, such as qubits, can be coupled through the 
interaction with 
a quantum harmonic oscillator such that an effective qubit-qubit interaction is obtained \cite{CircuitQED1, CircuitQED2, CircuitQED, Ions, 
Int1, Retzker}. If the original couplings between the qubits and the oscillator can 
be amplified, so can the effective coupling strength between the qubits. 
For instance, for circuit QED systems \cite{CircuitQED1, CircuitQED2, 
CircuitQED} it was recently shown that the interaction strength between a charge 
qubit and a resonator mode is enhanced in the regime where the resonator mode is 
strongly squeezed \cite{Sim1}. However, strongly squeezing the resonator mode is experimentally challenging. In contrast, using HA to amplify the qubit-resonator coupling has the advantage that operating in the strong squeezing regime is not required. Instead, the qubit-resonator coupling is  enhanced by rapidly alternating between squeezing the resonator mode along different directions.

The interaction between a resonator mode and two charge qubits can be described by 
a Jaynes-Cummings type Hamiltonian 
$H_{0}=H_{\text{r}}+H_{q}+H_{\text{int}}$ where $H_{r}=\omega_{r} a^{\dagger} 
a$ and $H_{\text{q}}=\sum_j \frac{\omega_{j}}{2}\sigma_{z}^{(j)}$ are the free 
Hamiltonians of the resonator mode and the qubits $j=1,2$, respectively, and 
$H_{\text{int}}=\sum_{j}g_{j}(\sigma_{+}^{(j)}a+\sigma_{-}^{(j)}a^{\dagger})$ 
describes the interaction. If one assumes that the oscillator is initially 
prepared close to the vacuum state and the qubits are strongly detuned from the 
resonator frequency, $\Delta_{j}=|\omega_{j}-\omega_{r}|\gg g_{j}$, the 
oscillator degrees of freedom can be eliminated so that between the qubits an 
effective exchange interaction 
$V=\Omega(\sigma_{-}^{(1)}\sigma_{+}^{(2)}+\sigma_{+}^{(1)}\sigma_{-}^{(2)})$ is 
induced. For simplicity, we assume that $\omega=\omega_{j}$ and 
$g=g_{j}$ for $j=1,2$, so that the interaction strength is given by 
$\Omega=\frac{g^{2}}{\Delta}$. For $t=\frac{\pi}{2\Omega}$ such a Hamiltonian 
implements a SWAP gate. Since the bosonic annihilation operator transforms under the squeezing operation $S^{(\pm)}$ according 
to $S^{(\pm)\dagger}aS^{(\pm)}=a\cosh(r)\mp a^{\dagger}\sinh(r)$ we find for 
the Jaynes Cummings Hamiltonian 
$M(H_{0})=H_{\text{q}}+\cosh(2r)H_{\text{r}}+\cosh(r)H_{\text{int}}$\,. Thus, in 
the limit of squeezing the oscillator infinitely fast, the qubit-qubit 
interaction is determined by the amplified interaction strength
\begin{align} 
\label{eq:ampfrequency}
\Omega_{\text{amp}}(r)= \frac{\cosh^{2}(r)g^{2}}{|\omega-\cosh(2r)\omega_{r}|},
\end{align} 
so that the time to implement a SWAP gate is reduced to 
$t_{\text{swap}}(r)=\frac{\pi}{2\Omega_{\text{amp}}(r)}$. However, the effective 
qubit-qubit interaction strength cannot be arbitrarily enhanced, since for large squeezing parameters $r\gg 1$ the amplified strength 
$\Omega_{\text{amp}}$ tends to $\frac{g^{2}}{2\omega_{r}}$, as a consequence 
of the effect of the squeezing on the oscillator frequency as well as on the 
couplings\footnote{We remark here that in the regime in which the resonator mode is rapidely squeezed along $x$ and $p$ such that the dynamics is effectively governed by $M(H_{0})$, the initial state of the resonator mode is not changed. Thus, as long as the condition $|\omega-\cosh(2r)\omega_{r}|\gg \cosh(r)g$ is satisfied to eliminate the resonator mode in the presence of HA, then independently of how strong the resonator is squeezed an effective two qubit interaction determined by \eqref{eq:ampfrequency} is obtained.}. However, numerical investigations indicate that for finite squeezing 
a significant speed-up can be achieved.

Based on 
the full Jaynes-Cummings Hamiltonian, we numerically studied excitation 
transfer from one qubit to another, using bang-bang amplification. The 
qubits were initially prepared in a product state consisting of one up and one 
down 
eigenstate of $\sigma_{z}$ and the colormap of Fig. \ref{fig:probability} shows 
the probability for swapping the qubit states as a function of the evolution 
time $t$ and the squeezing parameter $r$. The 
parameters can be found in the caption of 
Fig. \ref{fig:probability}.
\begin{figure}
\centering
\includegraphics[width=0.80\columnwidth]{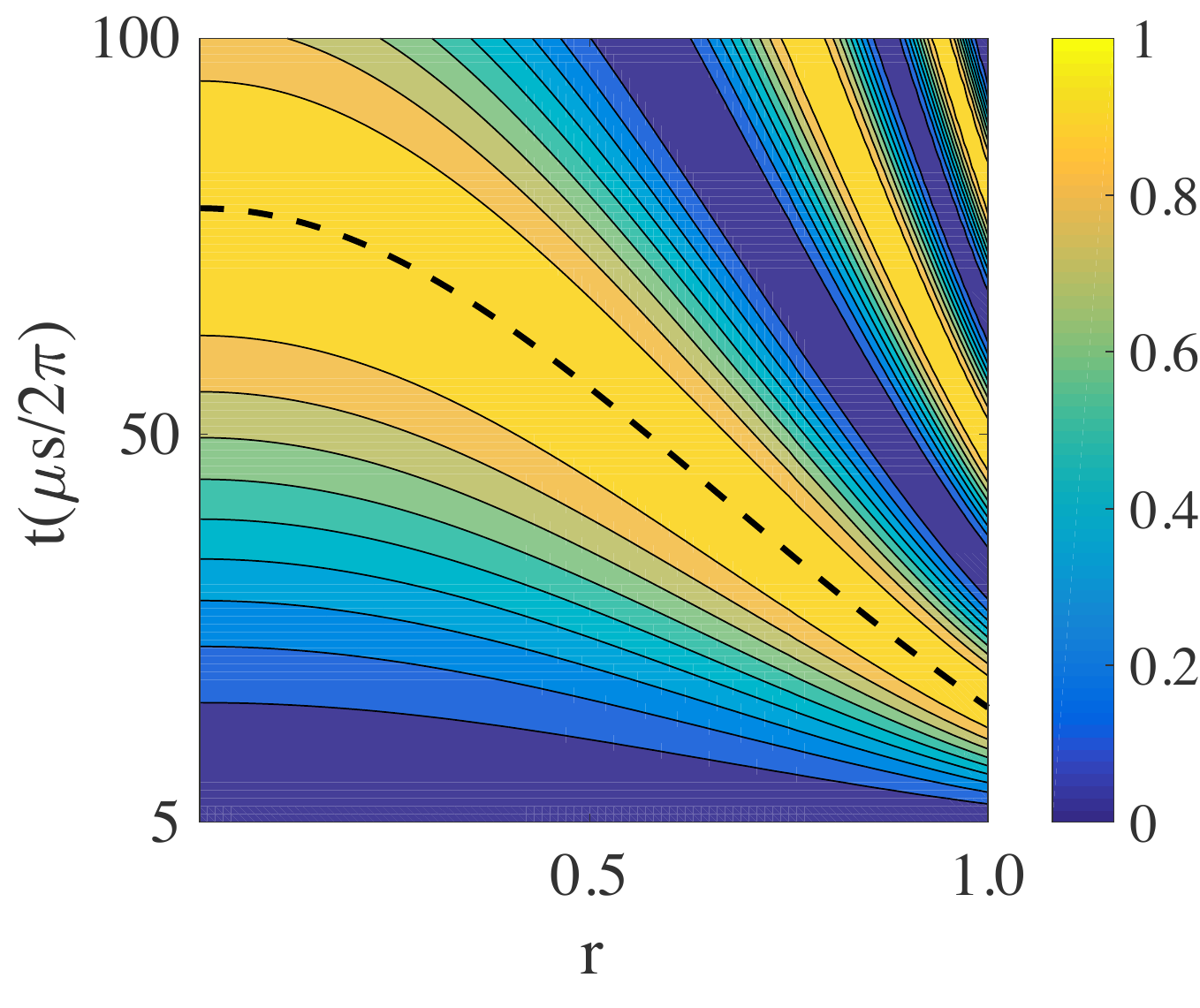}\caption{
\label{fig:probability} Probability of swapping the states of two qubits 
(colormap) that interact via a quantum harmonic oscillator as a function of the 
amount $r$ the oscillator is rapidly squeezed and the total evolution time $t$. 
The time to swap the state (dashed black curve) is determined by the amplified 
frequency \eqref{eq:ampfrequency} with parameters 
$\omega_{\text{r}}/2\pi=2.5\,\text{GHz},~\omega/2\pi=15\,\text{GHz},
~g/2\pi=50\,\text{MHz},$ taken in the range of a typical circuit QED setup 
\cite{CircuitQED}.}
\end{figure}
We see that the oscillations are accelerated by increasing the amount of 
squeezing. According to $t_{\text{swap}}(r)$ shown as the black dashed line, the 
time to swap the qubit states is significantly reduced. Unfortunately, in 
order to achieve such an improvement for the chosen parameters $\Delta t \approx 
1\text{ps}$. However, the better performance for the case when bounded smooth controls are used suggests that more sophisticated shaped pulses could lead to an efficient amplification of qubit interactions. Moreover, by combining HA with DD, obtained by additionally applying $\mathcal R(\pi)$ on the oscillator and $\sigma_{z}$ on the qubits such that the Jaynes-Cummings interactions stays invariant while interactions with the environment are suppressed, gates can be implemented faster and at the same time decoherence is reduced.  
     
\section{Conclusions}
As indicated in Fig. 1, every (traceless) Hamiltonian describing a finite-dimensional quantum system
can be averaged to zero using a decoupling 
sequence. However, in finite dimensions the Hamiltonian cannot be amplified. In 
contrast, some infinite-dimensional systems cannot be averaged out, 
whereas, as we have shown, some infinite-dimensional systems can be amplified. 
This observation has formed the basis of 
\emph{Hamiltonian Amplification} introduced here and proven to be applicable to a 
broad class of continuouts-variable systems through local squeezing operations 
even without full knowledge of, or full control over, the parameters present in the 
Hamiltonian. We showed how Hamiltonian Amplification can be combined with Dynamical Decoupling, which allows for simultaneously amplifying the system parameters and suppressing unwanted interactions. We further illustrated Hamiltonian Amplification in a hybrid system to speed up the implementation of quantum logic gates, thereby showing the broad scope of the proposed scheme in manipulating time scales of quantum systems. 
By amplifying/suppressing the relevant coupling and frequency components, we believe that the 
scheme opens up new prospects for a wide range of areas \cite{Sensing2,Entanglement2,OptoMRev,Romero,Hensinger,Networks,HerschMe}. We further note that the simplest form of HA was recently demonstrated in a trapped-ion experiment by amplifying the position operator through alternating between squeezing and anti-squeezing, indicating that HA is feasible with current technology \cite{Burd}.

There is certainly room for more sophisticated pulse sequences, possibly obtained from 
well-known methods used in dynamical decoupling \cite{Euler, Robust, HigerOrder, UhrigBos} 
and optimal control theory \cite{ControlRev1}, that would allow for even more 
efficient and robust amplification. Furthermore, in this work Hamiltonian Amplification was introduced for quadratic Hamlitonians but the observation that $M^{(\text{HA})}(x^{n}+p^{n})=\cosh(nr)(x^{n}+p^{n})$ with $n$ being an integer suggest that some non-linear terms can be amplified too, which will be investigated in future work.

\section*{Acknowledgements}
 C.C. acknowledges funding from the grant PICT 
2015-2236. D.I.B. is supported by AFOSR (grant FA9550-16-1-0254), ARO (grant W911NF-19-1-0377), and DARPA (grant D19AP00043). C. A. was supported by the NSF (grant CHE-1464569) and the ARO (grant W911NF-19-1-0382). H. R. was supported by the DOE (grant DE-FG02-02ER15344). 

\noindent 
\bibliographystyle{unsrturl}
\bibliography{source05132020.bbl}

\begin{thebibliography}{10}

\bibitem{Entanglement1}
L.~{Dicarlo}, M.~D. {Reed}, L.~{Sun}, B.~R. {Johnson}, J.~M. {Chow}, J.~M.
  {Gambetta}, L.~{Frunzio}, S.~M. {Girvin}, M.~H. {Devoret}, and R.~J.
  {Schoelkopf}.
\newblock {Preparation and measurement of three-qubit entanglement in a
  superconducting circuit}.
\newblock {\em Nature}, 467(7315):574--578, Sep 2010.
\newblock \href {http://dx.doi.org/10.1038/nature09416}
  {\path{doi:10.1038/nature09416}}.

\bibitem{Entanglement2}
T.~A. Palomaki, J.~D. Teufel, R.~W. Simmonds, and K.~W. Lehnert.
\newblock Entangling mechanical motion with microwave fields.
\newblock {\em Science}, 342(6159):710--713, Nov 2013.
\newblock \href {http://dx.doi.org/10.1126/science.1244563}
  {\path{doi:10.1126/science.1244563}}.

\bibitem{MultiQGates1}
D.~{Leibfried}, B.~{DeMarco}, V.~{Meyer}, D.~{Lucas}, M.~{Barrett},
  J.~{Britton}, W.~M. {Itano}, B.~{Jelenkovi{\'c}}, C.~{Langer},
  T.~{Rosenband}, and D.~J. {Wineland }.
\newblock {Experimental demonstration of a robust, high-fidelity geometric two
  ion-qubit phase gate}.
\newblock {\em Nature}, 422(6930):412--415, Mar 2003.
\newblock \href {http://dx.doi.org/10.1038/nature01492}
  {\path{doi:10.1038/nature01492}}.

\bibitem{MultiQGates2}
T.~R. {Tan}, J.~P. {Gaebler}, Y.~{Lin}, Y.~{Wan}, R.~{Bowler}, D.~{Leibfried},
  and D.~J. {Wineland}.
\newblock {Multi-element logic gates for trapped-ion qubits}.
\newblock {\em Nature}, 528(7582):380--383, Dec 2015.
\newblock \href {http://dx.doi.org/10.1038/nature16186}
  {\path{doi:10.1038/nature16186}}.

\bibitem{MultiQGates3}
Thaddeus~D Ladd, Fedor Jelezko, Raymond Laflamme, Yasunobu Nakamura,
  Christopher Monroe, and Jeremy~Lloyd O’Brien.
\newblock {Quantum computers}.
\newblock {\em Nature}, 464(7285):45--53, Mar 2010.
\newblock \href {http://dx.doi.org/10.1038/nature08812}
  {\path{doi:10.1038/nature08812}}.

\bibitem{Sensing1}
Vittorio {Giovannetti}, Seth {Lloyd}, and Lorenzo {Maccone}.
\newblock {Advances in quantum metrology}.
\newblock {\em Nature Photonics}, 5(4):222--229, Apr 2011.
\newblock \href {http://dx.doi.org/10.1038/nphoton.2011.35}
  {\path{doi:10.1038/nphoton.2011.35}}.

\bibitem{Sensing2}
Shimon Kolkowitz, Ania~C. Bleszynski~Jayich, Quirin~P. Unterreithmeier,
  Steven~D. Bennett, Peter Rabl, J.~G.~E. Harris, and Mikhail~D. Lukin.
\newblock Coherent sensing of a mechanical resonator with a single-spin qubit.
\newblock {\em Science}, 335(6076):1603--1606, Mar 2012.
\newblock \href {http://dx.doi.org/10.1126/science.1216821}
  {\path{doi:10.1126/science.1216821}}.

\bibitem{SpeedLimitsRev1}
Sebastian Deffner and Steve Campbell.
\newblock Quantum speed limits: from heisenberg's uncertainty principle to
  optimal quantum control.
\newblock {\em Journal of Physics A: Mathematical and Theoretical},
  50(45):453001, Oct 2017.
\newblock \href {http://dx.doi.org/10.1088/1751-8121/aa86c6}
  {\path{doi:10.1088/1751-8121/aa86c6}}.

\bibitem{SpeedLimitsRev2}
Michael~R Frey.
\newblock Quantum speed limits—primer, perspectives, and potential future
  directions.
\newblock {\em Quantum Information Processing}, 15(10), Nov.

\bibitem{LVioal1}
Lorenza Viola, Emanuel Knill, and Seth Lloyd.
\newblock Dynamical decoupling of open quantum systems.
\newblock {\em Phys. Rev. Lett.}, 82:2417--2421, Mar 1999.
\newblock \href {http://dx.doi.org/10.1103/PhysRevLett.82.2417}
  {\path{doi:10.1103/PhysRevLett.82.2417}}.

\bibitem{Dec1}
Christian Arenz, Daniel Burgarth, and Robin Hillier.
\newblock Dynamical decoupling and homogenization of continuous variable
  systems.
\newblock {\em Journal of Physics A: Mathematical and Theoretical},
  50(13):135303, Mar 2017.
\newblock \href {http://dx.doi.org/10.1088/1751-8121/aa6017}
  {\path{doi:10.1088/1751-8121/aa6017}}.

\bibitem{BookLidar}
Daniel Lidar and Brun Todd.
\newblock Quantum error correction, Sep 2013.
\newblock \href {http://dx.doi.org/10.1017/CBO9781139034807}
  {\path{doi:10.1017/CBO9781139034807}}.

\bibitem{Lidar}
Daniel~A. Lidar, Paolo Zanardi, and Kaveh Khodjasteh.
\newblock Distance bounds on quantum dynamics.
\newblock {\em Phys. Rev. A}, 78:012308, Jul 2008.
\newblock \href {http://dx.doi.org/10.1103/PhysRevA.78.012308}
  {\path{doi:10.1103/PhysRevA.78.012308}}.

\bibitem{Sim1}
C.~Leroux, L.~C.~G. Govia, and A.~A. Clerk.
\newblock Enhancing cavity quantum electrodynamics via antisqueezing: Synthetic
  ultrastrong coupling.
\newblock {\em Phys. Rev. Lett.}, 120:093602, Mar 2018.
\newblock \href {http://dx.doi.org/10.1103/PhysRevLett.120.093602}
  {\path{doi:10.1103/PhysRevLett.120.093602}}.

\bibitem{Sim2}
Xin-You L\"u, Ying Wu, J.~R. Johansson, Hui Jing, Jing Zhang, and Franco Nori.
\newblock Squeezed optomechanics with phase-matched amplification and
  dissipation.
\newblock {\em Phys. Rev. Lett.}, 114:093602, Mar 2015.
\newblock \href {http://dx.doi.org/10.1103/PhysRevLett.114.093602}
  {\path{doi:10.1103/PhysRevLett.114.093602}}.

\bibitem{Sim3}
Marc-Antoine {Lemonde}, Nicolas {Didier}, and Aashish~A. {Clerk}.
\newblock {Enhanced nonlinear interactions in quantum optomechanics via
  mechanical amplification}.
\newblock {\em Nature Communications}, 7:11338, Apr 2016.
\newblock \href {http://dx.doi.org/10.1038/ncomms11338}
  {\path{doi:10.1038/ncomms11338}}.

\bibitem{Home}
J~Alonso, F~M Leupold, B~C Keitch, and J~P Home.
\newblock Quantum control of the motional states of trapped ions through fast
  switching of trapping potentials.
\newblock {\em New Journal of Physics}, 15(2):023001, Feb 2013.
\newblock \href {http://dx.doi.org/10.1088/1367-2630/15/2/023001}
  {\path{doi:10.1088/1367-2630/15/2/023001}}.

\bibitem{Magnus}
J.~S. Waugh, L.~M. Huber, and U.~Haeberlen.
\newblock Approach to high-resolution nmr in solids.
\newblock {\em Phys. Rev. Lett.}, 20:180--182, Jan 1968.
\newblock \href {http://dx.doi.org/10.1103/PhysRevLett.20.180}
  {\path{doi:10.1103/PhysRevLett.20.180}}.

\bibitem{QuadraticH}
Alessandro Ferraro, Stefano Olivares, and Matteo~GA Paris.
\newblock {\em Gaussian states in quantum information}.
\newblock Bibliopolis, 2005.
\newblock URL: \url{https://air.unimi.it/handle/2434/15713#.Xrr7cy-z1hE}.

\bibitem{Serafini}
Alessio Serafini.
\newblock {\em Quantum continuous variables: a primer of theoretical methods}.
\newblock CRC press, 2017.
\newblock \href {http://dx.doi.org/https://doi.org/10.1201/9781315118727}
  {\path{doi:https://doi.org/10.1201/9781315118727}}.

\bibitem{QuadraticH1}
Leonard Mandel and Emil Wolf.
\newblock {\em Optical coherence and quantum optics}.
\newblock Cambridge university press, 1995.
\newblock \href {http://dx.doi.org/10.1017/CBO9781139644105}
  {\path{doi:10.1017/CBO9781139644105}}.

\bibitem{OptoMRev}
Markus Aspelmeyer, Tobias~J. Kippenberg, and Florian Marquardt.
\newblock Cavity optomechanics.
\newblock {\em Rev. Mod. Phys.}, 86:1391--1452, Dec 2014.
\newblock \href {http://dx.doi.org/10.1103/RevModPhys.86.1391}
  {\path{doi:10.1103/RevModPhys.86.1391}}.

\bibitem{Trotter}
Masuo Suzuki.
\newblock Decomposition formulas of exponential operators and lie exponentials
  with some applications to quantum mechanics and statistical physics.
\newblock {\em Journal of mathematical physics}, 26(4):601--612, 1985.
\newblock \href {http://dx.doi.org/10.1063/1.526596}
  {\path{doi:10.1063/1.526596}}.

\bibitem{HerschContinuous}
Rebing Wu, Raj Chakrabarti, and Herschel Rabitz.
\newblock Optimal control theory for continuous-variable quantum gates.
\newblock {\em Phys. Rev. A}, 77:052303, May 2008.
\newblock \href {http://dx.doi.org/10.1103/PhysRevA.77.052303}
  {\path{doi:10.1103/PhysRevA.77.052303}}.

\bibitem{DanielContinuous}
Marco~G. Genoni, Alessio Serafini, M.~S. Kim, and Daniel Burgarth.
\newblock Dynamical recurrence and the quantum control of coupled oscillators.
\newblock {\em Phys. Rev. Lett.}, 108:150501, Apr 2012.
\newblock \href {http://dx.doi.org/10.1103/PhysRevLett.108.150501}
  {\path{doi:10.1103/PhysRevLett.108.150501}}.

\bibitem{Silvan}
Roger~S. Bliss and Daniel Burgarth.
\newblock Quantum control of infinite-dimensional many-body systems.
\newblock {\em Phys. Rev. A}, 89:032309, Mar 2014.
\newblock \href {http://dx.doi.org/10.1103/PhysRevA.89.032309}
  {\path{doi:10.1103/PhysRevA.89.032309}}.

\bibitem{Banchi}
Leonardo Banchi, Samuel~L. Braunstein, and Stefano Pirandola.
\newblock Quantum fidelity for arbitrary gaussian states.
\newblock {\em Phys. Rev. Lett.}, 115:260501, Dec 2015.
\newblock \href {http://dx.doi.org/10.1103/PhysRevLett.115.260501}
  {\path{doi:10.1103/PhysRevLett.115.260501}}.

\bibitem{Vitali}
D.~Vitali and P.~Tombesi.
\newblock Using parity kicks for decoherence control.
\newblock {\em Phys. Rev. A}, 59:4178--4186, Jun 1999.
\newblock \href {http://dx.doi.org/10.1103/PhysRevA.59.4178}
  {\path{doi:10.1103/PhysRevA.59.4178}}.

\bibitem{CircuitQED1}
A.~{Wallraff}, D.~I. {Schuster}, A.~{Blais}, L.~{Frunzio}, R.~S. {Huang},
  J.~{Majer}, S.~{Kumar}, S.~M. {Girvin}, and R.~J. {Schoelkopf}.
\newblock {Strong coupling of a single photon to a superconducting qubit using
  circuit quantum electrodynamics}.
\newblock {\em Nature}, 431(7005):162--167, Sep 2004.
\newblock \href {http://dx.doi.org/10.1038/nature02851}
  {\path{doi:10.1038/nature02851}}.

\bibitem{CircuitQED2}
Mika~A Sillanp{\"a}{\"a}, Jae~I Park, and Raymond~W Simmonds.
\newblock Coherent quantum state storage and transfer between two phase qubits
  via a resonant cavity.
\newblock {\em Nature}, 449(7161):438--442, Jul 2007.
\newblock \href {http://dx.doi.org/10.1038/nature06124}
  {\path{doi:10.1038/nature06124}}.

\bibitem{CircuitQED}
Alexandre Blais, Jay Gambetta, A.~Wallraff, D.~I. Schuster, S.~M. Girvin, M.~H.
  Devoret, and R.~J. Schoelkopf.
\newblock Quantum-information processing with circuit quantum electrodynamics.
\newblock {\em Phys. Rev. A}, 75:032329, Mar 2007.
\newblock \href {http://dx.doi.org/10.1103/PhysRevA.75.032329}
  {\path{doi:10.1103/PhysRevA.75.032329}}.

\bibitem{Ions}
R.~{Blatt} and C.~F. {Roos}.
\newblock {Quantum simulations with trapped ions}.
\newblock {\em Nature Physics}, 8(4):277--284, Apr 2012.
\newblock \href {http://dx.doi.org/10.1038/nphys2252}
  {\path{doi:10.1038/nphys2252}}.

\bibitem{Int1}
Stephan Welte, Bastian Hacker, Severin Daiss, Stephan Ritter, and Gerhard
  Rempe.
\newblock Photon-mediated quantum gate between two neutral atoms in an optical
  cavity.
\newblock {\em Phys. Rev. X}, 8:011018, Feb 2018.
\newblock \href {http://dx.doi.org/10.1103/PhysRevX.8.011018}
  {\path{doi:10.1103/PhysRevX.8.011018}}.

\bibitem{Retzker}
Alessio Serafini, Alex Retzker, and Martin~B Plenio.
\newblock Generation of continuous variable squeezing and entanglement of
  trapped ions in time-varying potentials.
\newblock {\em Quantum Information Processing}, 8(6):619, 2009.
\newblock \href {http://dx.doi.org/10.1007/s11128-009-0141-x}
  {\path{doi:10.1007/s11128-009-0141-x}}.

\bibitem{Romero}
H~Pino, J~Prat-Camps, K~Sinha, B~Prasanna Venkatesh, and O~Romero-Isart.
\newblock On-chip quantum interference of a superconducting microsphere.
\newblock {\em Quantum Science and Technology}, 3(2):025001, Jan 2018.
\newblock \href {http://dx.doi.org/10.1088/2058-9565/aa9d15}
  {\path{doi:10.1088/2058-9565/aa9d15}}.

\bibitem{Hensinger}
K.~Lake, S.~Weidt, J.~Randall, E.~D. Standing, S.~C. Webster, and W.~K.
  Hensinger.
\newblock Generation of spin-motion entanglement in a trapped ion using
  long-wavelength radiation.
\newblock {\em Phys. Rev. A}, 91:012319, Jan 2015.
\newblock \href {http://dx.doi.org/10.1103/PhysRevA.91.012319}
  {\path{doi:10.1103/PhysRevA.91.012319}}.

\bibitem{Networks}
J.~I. Cirac, P.~Zoller, H.~J. Kimble, and H.~Mabuchi.
\newblock Quantum state transfer and entanglement distribution among distant
  nodes in a quantum network.
\newblock {\em Phys. Rev. Lett.}, 78:3221--3224, Apr 1997.
\newblock \href {http://dx.doi.org/10.1103/PhysRevLett.78.3221}
  {\path{doi:10.1103/PhysRevLett.78.3221}}.

\bibitem{HerschMe}
Christian Arenz and Herschel Rabitz.
\newblock Controlling qubit networks in polynomial time.
\newblock {\em Phys. Rev. Lett.}, 120:220503, May 2018.
\newblock \href {http://dx.doi.org/10.1103/PhysRevLett.120.220503}
  {\path{doi:10.1103/PhysRevLett.120.220503}}.

\bibitem{Burd}
S.~C. Burd, R.~Srinivas, J.~J. Bollinger, A.~C. Wilson, D.~J. Wineland,
  D.~Leibfried, D.~H. Slichter, and D.~T.~C. Allcock.
\newblock Quantum amplification of mechanical oscillator motion.
\newblock {\em Science}, 364(6446):1163--1165, 2019.
\newblock \href {http://dx.doi.org/10.1126/science.aaw2884}
  {\path{doi:10.1126/science.aaw2884}}.

\bibitem{Euler}
Lorenza Viola and Emanuel Knill.
\newblock Robust dynamical decoupling of quantum systems with bounded controls.
\newblock {\em Phys. Rev. Lett.}, 90:037901, Jan 2003.
\newblock \href {http://dx.doi.org/10.1103/PhysRevLett.90.037901}
  {\path{doi:10.1103/PhysRevLett.90.037901}}.

\bibitem{Robust}
K.~Khodjasteh and D.~A. Lidar.
\newblock Fault-tolerant quantum dynamical decoupling.
\newblock {\em Phys. Rev. Lett.}, 95:180501, Oct 2005.
\newblock \href {http://dx.doi.org/10.1103/PhysRevLett.95.180501}
  {\path{doi:10.1103/PhysRevLett.95.180501}}.

\bibitem{HigerOrder}
JM~Cai, Boris Naydenov, Rainer Pfeiffer, Liam~P McGuinness, Kay~D Jahnke, Fedor
  Jelezko, Martin~B Plenio, and Alex Retzker.
\newblock Robust dynamical decoupling with concatenated continuous driving.
\newblock {\em New Journal of Physics}, 14(11):113023, 2012.
\newblock \href {http://dx.doi.org/10.1088/1367-2630/14/11/113023}
  {\path{doi:10.1088/1367-2630/14/11/113023}}.

\bibitem{UhrigBos}
Margret Heinze and Robert K\"onig.
\newblock Universal uhrig dynamical decoupling for bosonic systems.
\newblock {\em Phys. Rev. Lett.}, 123:010501, Jul 2019.
\newblock \href {http://dx.doi.org/10.1103/PhysRevLett.123.010501}
  {\path{doi:10.1103/PhysRevLett.123.010501}}.

\bibitem{ControlRev1}
Steffen~J Glaser, Ugo Boscain, Tommaso Calarco, Christiane~P Koch, Walter
  K{\"o}ckenberger, Ronnie Kosloff, Ilya Kuprov, Burkhard Luy, Sophie Schirmer,
  Thomas Schulte-Herbr{\"u}ggen, et~al.
\newblock Training schr{\"o}dinger’s cat: quantum optimal control.
\newblock {\em The European Physical Journal D}, 69(12):279, 2015.
\newblock \href {http://dx.doi.org/10.1140/epjd/e2015-60464-1}
  {\path{doi:10.1140/epjd/e2015-60464-1}}.

\end{thebibliography}

\onecolumn
\appendix

\section*{Appendix}

\section{No amplification of bounded Hamiltonians}\label{app1}
The proof that finite dimensional quantum systems cannot be amplified follows directly from results obtained in \cite{Lidar}. We consider a finite dimensional quantum system whose total Hamiltonian is given by $H(t)=H_{0}+H_{c}(t)$, where $H_{0}$ is the Hamiltonian that we want to amplify by a coherent controller described by the Hamiltonian $H_{c}(t)$. The total evolution $U(t)=U_{c}(t)\tilde{U}(t)$ can be separated into the evolution of the controller alone $U_{c}(t)=\mathcal T\exp(-i\int_{0}^{t}H_{c}(t^{\prime})dt^{\prime})$ and its action on $H_{0}$ given by $\tilde{U}(t)=\mathcal T\exp(-i\int_{0}^{t}U_{c}^{\dagger}(t^{\prime})H_{0}U_{c}(t^{\prime})dt^{\prime})=\exp(-it\bar{H}(t))$ where $\bar{H}(t)$ is given by the Magnus expansion. Since there always exist a set of unitary transformations $\{W(s)\}$ that allows to express $\bar{H}(t)$ as $\bar{H}(t)=\frac{1}{t}\int_{0}^{t}W^{\dagger}(t^{\prime})U_{c}^{\dagger}(t^{\prime})H_{0}U_{c}(t^{\prime})W(t^{\prime})dt^{\prime}$, we have that for any unitarily invariant norm $\Vert\bar{H}(t)\Vert\leq \Vert H_{0}\Vert$ for all $t$, which implies that Hamiltonian Amplification is not possible for finite dimensional systems. Note, however, that the proof becomes meaningless when infinite dimensional systems described by unbounded operators are considered, as then the norms are not defined.  

\section{Derivation of the error bound}\label{app2}
In order to bound the error for obtaining an evolution that is generated by the amplified Hamiltonian $M(H_{0})=\frac{1}{2}(\mathcal S^{(+)\dagger}H_{0}\mathcal S^{(+)}+\mathcal S^{(-)\dagger}H_{0}\mathcal S^{(-)})=\cosh(2r)H_{0}$, valid for any quadratic Hamiltonian of the form given in Eq. (2) of the main body of the manuscript, we use the symplectic representation of quadratic Hamiltonians. That is, the unitary time evolution $U=e^{-iH_{0}t}$ is represented by a symplectic matrix $e^{-A_{0}\Omega t}\in \text{Sp}(2N,\mathbb R)$  where $A_{0}\in\mathbb R^{2N\times 2N}$ is a symmetric matrix and $\Omega$ is the symplectic form. If we denote by $A^{(\pm)}$ the to $\mathcal S^{(\pm)^{\dagger}}H_{0}\mathcal S^{(\pm)}$ corresponding real and symmetric matrices such that $\frac{1}{2}(A^{(+)}+A^{(-)})=\cosh(2r)A_{0}$, the error $\epsilon=\left\Vert e^{-\cosh(2r)A_{0}\Omega t}-e^{-\frac{t}{2n}A^{(+)}\Omega}e^{-\frac{t}{2n}A^{(-)}\Omega} \right \Vert$ for obtaining an amplified $A_{0}$ is upper bounded by \cite{Trotter}, 
\begin{align}
\epsilon \leq \frac{t\Delta t}{8}\Vert [A^{(+)}\Omega,A^{(-)}\Omega]\Vert e^{\frac{t}{2}(\Vert A^{(+)}\Omega\Vert +\Vert A^{(-)}\Omega\Vert )},
\end{align}
 where $\Delta t=\frac{t}{n}$ and $\Vert A\Vert=\sqrt{\text{tr}\{A^{\dagger}A\}}$ is the Hilbert-Schmidt norm. Since $\Vert [A^{(+)}\Omega,A^{(-)}\Omega]\Vert \leq  2 |\sinh(4r)| N^{2}\max_{i,j}|A_{i,j}|^{2}$,
and $\Vert A^{(\pm)}\Omega\Vert\leq \sqrt{2\cosh(4r)}N\max_{i,j}|A_{i,j}|$, with $\omega_{\text{max}}=\max_{i,j}|A_{i,j}|$ we arrive at the desired bound \eqref{eq:error}. 

\section{1st and 2nd order terms of the Magnus expansion}\label{app3}
Here we derive the pulse given in the main body of the manuscript as well as show that a large class of pulses 
simultaneously amplifies the 1st order and suppresses the 2nd order terms of the 
Magnus expansion. We recall from the main body of the paper that the 
first order of the Magnus expansion is given by 
\begin{align}
\bar{H}^{(0)}=\frac{1}{2\Delta t}\int_{0}^{2 \Delta t}H(t)dt,
\end{align}
where
\begin{align}
H(t)&=U_{c}^{\dagger}(t)H_{0}U_{c}(t) \nonumber \\
&=\sum_{i,j}\left(\omega_{i,j}^{(x)}x_{i}x_{j}e^{-2R(t)}+\omega_{i,j}^{(p)}p_{i}p_{j}e^{2R(t)}\right),
\end{align}
with $R(t)=\int_{0}^{t}r(t^{\prime})dt^{\prime}$ being the integrated pulses. Thus, in order to amplify the 1st order of the Magnus expansion we need to find a pulse $r(t)$ that yields $\bar{H}^{(0)}=\lambda H_{0}$ where $\lambda=\frac{1}{2\Delta t}\int_{0}^{2\Delta t}\exp(\pm 2R(t))dt>1$. Since $\int_{0}^{2\Delta t}\exp(\pm K\sin(\pi t/\Delta t))dt=2\Delta t I_{0}(K)$ where $I_{0}(K)$ is the modified Bessel function of the first kind, we find that the pulse 
\begin{align}
r(t)=\frac{\pi K}{2\Delta t}\cos\left(\frac{\pi t}{\Delta t}\right),
\end{align} 
amplifies the 1st order of the Magnus expansion by an amount $\lambda=I_{0}(K)$. 

We now turn to the question whether there exist pulses that amplify the 1st order and simultaneously suppress the 2nd order of the Magnus expansion. 
The 2nd order terms of the Magnus expansion reads 
\begin{align}
\bar{H}^{(1)}&=-\frac{i}{4\Delta t}\int_{0}^{2\Delta t}dt_{1}\int_{0}^{t_{1}}dt_{2}[H(t_{1}),H(t_{2})]\nonumber \\
	&=-\frac{i}{2\Delta t}\sum_{i,j,i^{\prime},j^{\prime}}\omega_{i,j}^{(x)}\omega_{i^{\prime},j^{\prime}}^{(p)}[x_{i}x_{j},p_{i^{\prime}}p_{j^{\prime}}]\int_{0}^{2\Delta t}dt_{1}\int_{0}^{t_{1}}dt_{2}\sinh(2(R(t_{2})-R(t_{1}))).
\end{align}
Thus, the integrated pulse $R(t)$ that simultaneously amplifies $\bar{H}^{(0)}$ and suppresses $\bar{H}^{(1)}$ has to simultaneously satisfy the two conditions 
\begin{align}
\label{eq:I1}
	I_{1} \equiv& \int_{0}^{T}\sinh(u(t))dt=0,\\ 
	I_{2}\equiv& \int_{0}^{T}dt_{1}\int_{0}^{t_{1}}dt_{2}\sinh(u(t_{1})-u(t_{2}))=0, \label{eq:I2}
\end{align}
where from now on we use the short hand notation $T=2\Delta t$ and $u(t)=2R(t)$.
\\
\\
\begin{theorem}\label{Therem1}
A function $u(x)$ such that $u(-x) = u(x+T)$ satisfies conditions \eqref{eq:I2}. In particular,
\begin{align}
	u(t) = \sum_{n=0}^{\infty} \left( a_n \sin\frac{(2n + 1)\pi t}{T} + b_n \cos\frac{2n \pi t}{T} \right).
\end{align}
\end{theorem}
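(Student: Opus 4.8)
The plan is to show that the hypothesis $u(-x)=u(x+T)$ is really a reflection symmetry about the midpoint of the interval, and that this symmetry alone forces the antisymmetric double integral $I_2$ in \eqref{eq:I2} to vanish. First I would recast the hypothesis: setting $x=t-T$ in $u(-x)=u(x+T)$ gives $u(T-t)=u(t)$ for all $t$, so $u$ is symmetric about $t=T/2$. Before proceeding I would check that the proposed series meets this requirement term by term, which is immediate since $\sin\frac{(2n+1)\pi(T-t)}{T}=\sin\frac{(2n+1)\pi t}{T}$ and $\cos\frac{2n\pi(T-t)}{T}=\cos\frac{2n\pi t}{T}$; thus every $u$ of the stated form indeed obeys $u(T-t)=u(t)$.

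Next I would isolate the two structural properties of the integrand $F(t_1,t_2)\equiv\sinh\!\big(u(t_1)-u(t_2)\big)$. It is antisymmetric under exchange, $F(t_2,t_1)=-F(t_1,t_2)$, because $\sinh$ is odd; and, using $u(T-t)=u(t)$, it is invariant under the simultaneous reflection $(t_1,t_2)\mapsto(T-t_1,T-t_2)$, since $u(T-t_1)-u(T-t_2)=u(t_1)-u(t_2)$. The domain of $I_2$ is the lower triangle $\Delta=\{0\le t_2\le t_1\le T\}$ of the square $[0,T]^2$.

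The core step is to combine these two symmetries. I would perform the measure-preserving change of variables $t_i\mapsto T-t_i$ in $I_2$: the reflection carries $\Delta$ onto the upper triangle $\Delta'=\{0\le t_1\le t_2\le T\}$ while leaving $F$ unchanged, so $I_2=\int_{\Delta'}F$. Relabeling the two integration variables in $\Delta'$ and invoking antisymmetry then turns $\int_{\Delta'}F$ into $-\int_{\Delta}F=-I_2$. Hence $I_2=-I_2$, i.e.\ $I_2=0$, which is exactly condition \eqref{eq:I2}.

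The step I expect to require the most care is the bookkeeping of regions and signs in the last paragraph: one must verify that the reflection is an involution mapping $\Delta$ exactly onto $\Delta'$ (up to the measure-zero diagonal) with unit Jacobian, and that the subsequent variable swap is what supplies the decisive minus sign from the oddness of $\sinh$. I would also stress what the symmetry does \emph{not} buy us: it does not force $I_1$ in \eqref{eq:I1} to vanish — a nonzero constant $u$ already gives $I_1=T\sinh(u)\neq 0$ while trivially yielding $I_2=0$ — so the amplification constraint $I_1=0$ must be imposed separately as a single scalar condition on the coefficients $a_n,b_n$, which the remaining freedom in the series easily accommodates.
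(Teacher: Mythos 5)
Your proof is correct, and it reaches the conclusion by a more direct route than the paper's. Both arguments ultimately rest on the same symmetry: your composite involution $(t_1,t_2)\mapsto(T-t_2,T-t_1)$, obtained by composing the reflection $(t_1,t_2)\mapsto(T-t_1,T-t_2)$ with the relabeling $t_1\leftrightarrow t_2$, is exactly the substitution $t\mapsto -t$ at fixed $\tau$ in the rotated coordinates $t=(t_2+t_1-T)/2$, $\tau=(t_2-t_1+T)/2$ that the paper introduces in \eqref{IntG}. The difference is in the packaging. You take the hypothesis as given, restate it as the midpoint reflection symmetry $u(T-t)=u(t)$, and split the sign flip into two transparent steps---invariance of $\sinh\bigl(u(t_1)-u(t_2)\bigr)$ under reflection of the lower triangle onto the upper one, and antisymmetry under exchange---so that $I_2=-I_2$ follows with only trivial Jacobians and no functional-equation analysis. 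The paper works forwards instead: it rotates coordinates, posits oddness of the inner integrand in $t$, and then solves the resulting functional equation \eqref{FuncEq1XY} by a separation-of-variables argument to \emph{discover} the condition \eqref{FinalSingleFuncEq}. That derivation is what motivates the hypothesis of the theorem, but as a verification of sufficiency it is more roundabout (and \eqref{IntG} carries an undefined factor $g(\tau)$, harmless for the oddness argument but a loose end your version avoids). Your closing checks also track the paper well: the term-by-term verification that the displayed series obeys $u(T-t)=u(t)$ supplies what the paper only asserts about those Fourier modes solving \eqref{FinalSingleFuncEq}, and your observation that the symmetry does not force $I_1=0$ (constant $u$ being a counterexample) matches the paper's structure, in which condition \eqref{eq:I1} is checked separately for the particular families \eqref{ExSol1}--\eqref{ExSol2}.
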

\begin{proof}
Introducing new variables $t = (t_2 + t_1 - T)/2$, $\tau = (t_2 - t_1 + T)/2$, the integral \eqref{eq:I2} can be written as
\begin{align}
	I_2 &= 2 \int_0^{T/2} g(\tau) d\tau \int_{-\tau}^{\tau}  dt \sinh[u(t-\tau + T) - u(t + \tau)].   \label{IntG}
\end{align}
If the expression under the integral \eqref{IntG} is an odd function of $t$ (for all $\tau$), then $I_2 = 0$. This leads to the condition
\begin{align}\label{FuncEq1ForU}
	u(-t-\tau + T) - u(-t + \tau) = -u(t-\tau + T) + u(t + \tau), \quad \forall \tau.
\end{align}
A solution to the equation \eqref{FuncEq1ForU} satisfies Eq. \eqref{eq:I2}. Using new variables $x=t - \tau$ and $y = t + \tau$, Eq. \eqref{FuncEq1ForU} reads
\begin{align}\label{FuncEq1XY}
	u(-y+T) - u(y) = u(-x) - u(x+T), \quad \forall x, \, y.
\end{align}
As in the method of the separation of variables for PDEs, Eq. \eqref{FuncEq1XY} is broken into two separate equations:
\begin{align}
	& u(-y+T) - u(y) = \lambda, \quad \forall y, \label{FuncEqUy} \\ 
	& u(-x) - u(x+T) = \lambda, \quad \forall x,
\end{align}
where $\lambda$ is a separation constant. Substituting $y \to -x$ into Eq. \eqref{FuncEqUy} leads to
\begin{align} \notag
	u(-x) - u(x+T) = -\lambda , \qquad u(-x) - u(x+T) = \lambda, \quad \forall x.
\end{align}
Therefore, $\lambda = 0$ and we are left with the single linear equation:
\begin{align}\label{FinalSingleFuncEq}
	u(-x) = u(x+T).
\end{align}
The first set of linearly independent solutions of Eq. \eqref{FinalSingleFuncEq} can be found by substituting $u_n(t) = \cos(2n\pi t/T)$, $n=0,1,2,\ldots$; whereas, the second set reads $u_n(t) = \sin([2n+1]\pi t/T)$.
\end{proof}

Using this theorem, we show that Eq. \eqref{eq:I1} and Eq. \eqref{eq:I2} have the following solutions:
\newcommand*\widefbox[1]{\fbox{\hspace{2em}#1\hspace{2em}}}
\begin{align}%\begin{align}
	u(t) &= \sum_{n=0}^{\infty} a_n \cos\frac{2\pi (2n + 1) t}{T}, \label{ExSol1} \\
	u(t) &= \left(\sin\frac{2\pi t}{T}\right)^2 \sum_{n=0}^{\infty} a_n \cos\frac{2\pi (2n + 1) t}{T}, \label{ExSol3} \\
	u(t) &= \sin\left(\frac{4\pi t}{T}\right) \sum_{n=0}^{\infty} a_n \sin\frac{2\pi (2n + 1) t}{T}, \label{ExSol4}  \\
	u(t) &= a_n \cos\frac{4\pi nt}{T}, \qquad n=1,2,3,\ldots. \label{ExSol2}
%\end{align}
\end{align}

Clearly, the function \eqref{ExSol1} obeys Theorem \ref{Therem1} by construction. Now let us show that the function (\ref{ExSol1}) also satisfies the condition \eqref{eq:I1}:
\begin{align}
	I_1 &=   \int_0^T  \sinh\left(  \sum_{n=0}^{\infty} a_n \cos\frac{2\pi (2n + 1) t}{T} \right) dt 
		=	 \frac{T}{2\pi} \int_{0}^{2\pi}  \sinh\left(  \sum_{n=0}^{\infty} a_n \cos[(2n + 1)x] \right) dx. \notag
\end{align}
Recalling that $\cos(y \pm [2n+1]\pi) = -\cos(y)$, we obtain
\begin{align}
	& \int_{0}^{2\pi} \sinh\left(  \sum_{n=0}^{\infty} a_n \cos[(2n + 1)x] \right) dx 
	= \left( \int_{0}^{\pi} + \int_{\pi}^{\pi + \pi} \right) \cdots = \left( \int_{0}^{\pi} - \int_{0}^{\pi} \right) \cdots = 0 \Longrightarrow I_1 = 0.
\end{align}
In a similar fashion one can verify that the functions \eqref{ExSol3} and \eqref{ExSol4} satisfy the conditions \eqref{eq:I1} and \eqref{eq:I2}. The family of solutions \eqref{ExSol2} evidently obeys Theorem \ref{Therem1}. We show that it also satisfies \eqref{eq:I1},
\begin{align}
	I_1 &= \int_0^T \sinh \left(a_n \cos\frac{4\pi nt}{T}\right) dt = \frac{T}{4\pi n} \int_0^{4\pi n} \sinh (a_n \cos x) dx =  \frac{2n T}{4\pi n} \int_0^{2\pi} \sinh (a_n \cos x) dx \Longrightarrow \notag\\
	 & \int_0^{2\pi} \sinh (a_n \cos x) dx =  \left[ \int_0^{\pi/2} + \int_{\pi/2}^{\pi + \pi/2} + \int_{\pi + \pi/2}^{2\pi} \right] \cdots 
	 = \left[ \int_0^{\pi/2} + \int_{\pi/2}^{\pi + \pi/2} + \int_{-\pi/2}^{0} \right] \cdots \notag\\
	 & \qquad = \left[ \int_{-\pi/2}^{\pi/2} + \int_{\pi - \pi/2}^{\pi + \pi/2} \right] \cdots 
	 =\left[ \int_{-\pi/2}^{\pi/2} - \int_{- \pi/2}^{\pi/2} \right] \cdots = 0 \Longrightarrow I_1 = 0.
\end{align}
Note that from \eqref{ExSol2} we have with $n=1$ and $a_{1}\equiv K$, 
\begin{align}
\label{eq:integratedP}
R(t)=\frac{K}{2}\cos\left(\frac{2\pi t}{\Delta t}\right),
\end{align} 
which amplifies the first order of the Magnus expansion by a factor $\lambda=I_{0}(K)$.

\section{Error Analysis}\label{app4}

\begin{figure}
\centering
\includegraphics[width=0.50\columnwidth]{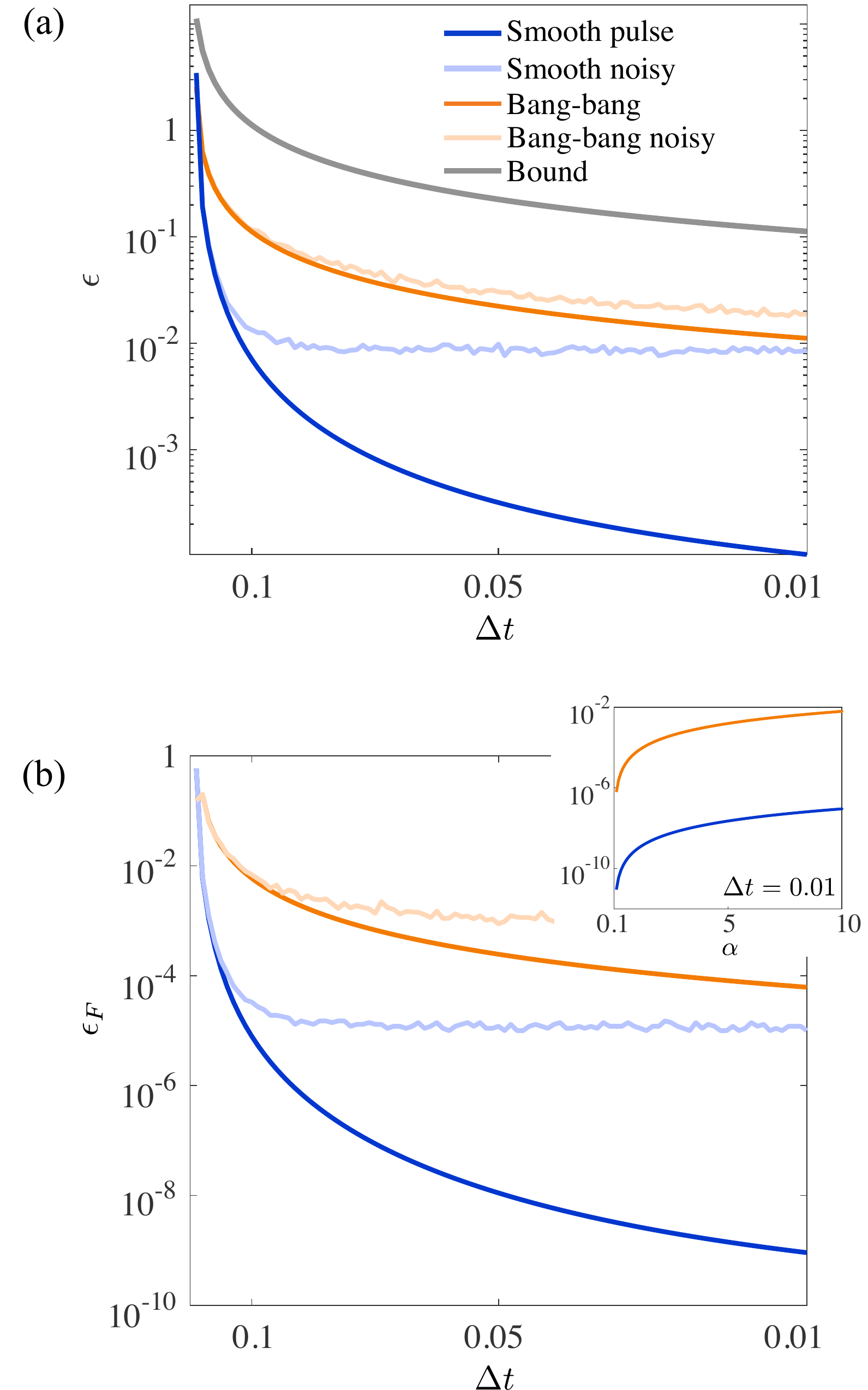}
\caption{\label{fig:error} Numerical study of Hamiltonian Amplification of a single harmonic oscillator with frequency $\omega=1/2$ for a fixed time $t=1$ and amplification by a factor $\lambda=2$. (a) error $\epsilon$ in the symplectic picture to obtain the amplified evolution $\exp(-\lambda A_{0}\Omega t)$ as a function of the spacing $\Delta t$ of the squeezing operation and (b) fidelity error $\epsilon_{F}=1-F$ with $F$ being the fidelity for Gaussian states explicitly given in \cite{Banchi} as a function of $\Delta t$ to prepare the coherent state $|\alpha\exp(-i\omega\lambda t)\rangle$ starting from an initial coherent state $|\alpha\rangle$ where $\alpha=1$. The inset plot in (b) shows for fixed $\Delta t$ the dependence of the fidelity error on $\alpha$.
In both case (a) and (b) the errors are evaluated for the case when the squeezing operations are applied 
instantaneously 
(orange curve) and in a smooth way (blue curve) described by the integrated pulse given in 
\eqref{eq:integratedP}. The soft orange curve shows amplification through 
bang-bang operations for which the squeezing phase is subjected to Gaussian 
noise with variance $\sigma=0.1$. The soft 
blue curve shows amplification through \eqref{eq:integratedP} containing 
Gaussian amplitude noise with mean zero and variance $\sigma=0.05$. Both noisy 
curves were averaged over $100$ trajectories. The solid grey curve represents 
the gate error upper bound for bang-bang operations derived in Sec \ref{app1} and explicitly given in the main body of manuscript. The figure shows a significant 
improvement in convergence in the case of smooth pulses, due to the cancelation 
of the second-order term in the Magnus expansion.}
\end{figure}

Any realistic pulse achieving Hamiltonian Amplification   
contains some noise, influencing the efficiency of the amplification process. To 
study the effect of noise we begin by considering the bang-bang sequence given 
by $S^{(\pm)}$. A generic squeezing transformation around an angle $\theta$ is described by  
\begin{align}
\label{eq:genericS}
S(\theta)=\exp\left(i\frac{r}{2}\left[\sin(\theta)(p^{2}-x^{2}
)+\cos(\theta)(xp+px)\right]\right),
\end{align}
noting that for $\theta=0$ and $\theta=\pi$ we obtain $S^{(\pm)}$, respectively. We model a first 
possible source of errors as fluctuations in the squeezing angle $\theta$ up to 
some error $\delta$. Then the squeezing in $x$ and $p$ 
direction perturbed by $\delta$ is described by 
$S_{\delta}^{(\pm)}=\exp(\pm 
i\frac{r}{2}[(xp+px)\cos(\delta)+(p^{2}-x^{2})\sin(\delta)])$. If we 
denote by $M_{\delta}$ the corresponding map formed by $\mathcal 
S_{\delta}^{(\pm)}$, the quadratic Hamiltonian $H_{0}$ given by Eq. (2) in the main paper 
transforms according to $M_{\delta}(H_{0})=\cosh(2r)H_{0}+\delta 
\sinh^{2}(r)H_{\text{er}}+\mathcal O(\delta^{2})$, where the first order 
error term is given by 
$H_{\text{er}}=\sum_{i,j}(\omega_{i,j}^{(x)}-\omega_{i,j}^{(p)})(x_{i}p_{j}+p_{i 
}x_{j})$. Thus, for small perturbations in the squeezing angle, the amplified 
Hamiltonian is linearly perturbed by $H_{\text{er}}$. However, we see that for 
quadratic Hamiltonians of the form 
$H_{0}=\sum_{i,j}\omega_{i,j}(x_{i}x_{j}+p_{i}p_{j})$ the error term vanishes 
to first order in $\delta$. A numerical example of the result is given by 
the soft orange curve in Fig.\,\ref{fig:error}.

As another example, we consider smooth 
amplification pulses that contain some noise. A broad class of noise can be 
described by including a (possibly unknown) Hamiltonian $H_{\text{er}}(t)$ 
describing the effect of the noise as possible random fluctuations in the total Hamiltonian. For a periodic 
controller the 1st order term of the Magnus expansion is then given by 
$\bar{H}^{(0)}=\frac{1}{2\Delta t}(\lambda H_{0}+H_{\text{noise}})$ where 
$H_{\text{noise}}=\int_{0}^{2\Delta 
t}dt^{\prime}U_{c}^{\dagger}(t^{\prime})H_{\text{er}}(t^{\prime})U_{c}(t^{\prime 
})$. 
Thus, when control is applied sufficiently fast, noise in the amplification 
pulses again adds linearly to the amplified Hamiltonian. In particular, the 
soft blue curve in Fig.\,\ref{fig:error} shows the error when the amplitude of the integrated pulse given in \eqref{eq:integratedP} suffers from Gaussian fluctuations about the desired 
time-dependent mean value (further details can be found in the caption). While for ideal 
bang-bang or smooth operations the amplification error
tends to zero when $\Delta t$ is reduced, we see that the noisy pulse and noisy 
bang-bang operations lead to a saturation of the amplification error at around 
$\epsilon\approx 10^{-2}$ (fidelity error $\epsilon_{F}\approx 10^{-5}$). The value at which the amplification error saturates 
is determined by the corresponding error term, which does not vanish when 
$\Delta t$ is reduced. 

As shown in Sec. \ref{app2}, while the error $\epsilon$ given by the Hilbert Schmidt distance of the amplified evolution in the symplectic representation $\exp(-\lambda A_{0}\Omega t)$ and the evolution obtained for a finite spacing $\Delta t$ of the squeezing operations can be upper bounded (solid grey curve in Fig. 1 (a)) using well known bounds for the Trotter sequence; a similar bound is challenging to obtain when the fidelity error $\epsilon_{F}$ for speeding up the preparation of Gaussian states is considered. However, it is interesting to note that both errors shown in Fig. 1 (a)
 and (b), respectively, indicate the same asymptotic behavior when $\Delta t$ is reduced. Furthermore, the inset plot in Fig. 1 (b) shows the dependence of the fidelity error on the expectation of $H_{0}$ with respect to the initial state, here taken to be a coherent state $|\alpha\rangle$. The development of an upper bound for $\epsilon_{F}$ in terms of $\Delta t$ capturing this behavior is left for future studies.

\end{document}